\begin{document}

\title{Relaxed Paxos: Quorum Intersection Revisited (Again)}

\author{Heidi Howard}
\email{heidi.howard@cl.cam.ac.uk}
\affiliation{%
  \institution{University of Cambridge}
  \city{}
  \state{}
  \country{}
}

\author{Richard Mortier}
\email{richard.mortier@cl.cam.ac.uk}
\affiliation{%
  \institution{University of Cambridge}
  \city{}
  \state{}
  \country{}
}

\begin{abstract}
Distributed consensus, the ability to reach agreement in the face of failures, is a fundamental primitive for constructing reliable distributed systems.
The Paxos algorithm is synonymous with consensus and widely utilized in production.
Paxos uses two phases: phase one and phase two, each requiring a quorum of acceptors, to reach consensus during a round of the protocol.
Traditionally, Paxos requires that all quorums, regardless of phase or round, intersect and majorities are often used for this purpose.
Flexible Paxos proved that it is only necessary for phase one quorum of a given round to intersect with the phase two quorums of all previous rounds.

In this paper, we re-examine how Paxos approaches the problem of consensus.
We look again at quorum intersection in Flexible Paxos and observe that quorum intersection can be safely weakened further.
Most notably, we observe that if a proposer learns that a value was proposed in some previous round then its phase one no longer needs to intersect with the phase two quorums from that round or from any previous rounds.
Furthermore, in order to provide an intuitive explanation of our results, we propose a novel abstraction for reasoning about Paxos which utilizes write-once registers.
\end{abstract}




\maketitle

\section{Introduction}
\label{sec:intro}

We depend upon distributed systems, yet the computers and networks that make up these systems are asynchronous and unreliable.
The longstanding problem of distributed consensus formalizes how to reliably reach agreement in such systems.
When solved, we become able to construct strongly consistent distributed systems from unreliable components~\cite{schneider_cs90}.

The Paxos algorithm~\cite{lamport_tcs98} is widely deployed in production to solve distributed consensus~\cite{burrows_osdi06,chandra_podc07,corbett_tocs13}.
%
%
Despite its popularity, Paxos is notoriously difficult to understand, leading to much follow up work, explaining the algorithm in simpler terms~\cite{lampson_wdag96,prisco_wdag97,lampson_podc01,lamport_sigact01,boichat_sigact03,guerraoui_cj07,meling_opodis13,ongaro_atc14,van_cs15}. 

Paxos operates over a series of \emph{rounds} (sometimes referred to as proposal numbers or ballot numbers). In each round, a \emph{proposer} may attempt to decide a value by executing a two phase protocol. Each phase of the protocol requires agreement from a quorum of \emph{acceptors}. All quorums are required to intersect, regardless of round or phase. In phase one of Paxos, the proposer learns which values have been \emph{accepted} in previous rounds. In phase two of Paxos, the proposer proposes a value in the current round. If the proposer learned in phase one that one or more values had been previously accepted then it must propose the value with the greatest round. If a quorum of acceptors accepts the proposal in phase two then the proposed value is decided.

Our understanding of Paxos is ever-evolving, for example, Flexible Paxos~\cite{howard_opodis16} observed that intersection is not needed between all quorums in Paxos. 
Specifically, the authors observed that it is only necessary for phase one quorums of a given round to intersect with the phase two quorums of all previous rounds.
If the same quorum system is used for all rounds, then this can be simplified to the statement that only phase one and phase two quorums must intersect.

This paper re-examines how Paxos approaches the problem of consensus with the aim of further weakening the quorum intersection requirements of Paxos and improving understanding of this famously difficult algorithm.
In recent years, immutability has been utilized in distributed systems to tame complexity~\cite{shapiro_report11,helland_acmq15}.

The success of these efforts has inspired us to apply immutability to the problem of distributed consensus.
We proceed as follows.
Once we have defined consensus~(\S\ref{sec:problem}), we propose an abstract solution to consensus that uses only write-once registers to enable more intuitive reasoning about safety~(\S\ref{sec:abstract}).
Using the abstractions developed so far, we then describe a novel variant of Paxos, known as \emph{Relaxed Paxos}~(\S\ref{sec:paxos}), which generalizes over Paxos and Flexible Paxos.
We compare our Relaxed Paxos protocol to the original Paxos algorithm ~(\S\ref{sec:quorums}) and observe the following: If a proposer learns that a value was proposed in some previous round then it is no longer required to hear from at least one acceptor in each quorum for both that round or from all previous rounds to complete phase one. 

\section{Problem definition}
\label{sec:problem}

The classic formulation of single-degree consensus considers how to decide upon a single value in a distributed system.
This seemingly simple problem is made non-trivial by the weak assumptions made about the underlying system: we assume only that the algorithm is correctly executed (i.e.,~the non-Byzantine model).
We do not assume that participants are reliable. For safety, we do not assume that the system is synchronous, participants may operate at arbitrary speeds and messages may be arbitrarily delayed.

We consider systems comprised of two types of participants: \emph{acceptors}, which store the value, and \emph{proposers}, which read/write the value.
Proposers take as input a value to be proposed and produce as output the value decided by the acceptors.
Messages may only be exchanged between proposers and acceptors and we assume that the set of participants, acceptors and proposers, is fixed and known to the proposers.

An algorithm, such as Paxos, solves consensus if it satisfies the following three requirements:
\begin{description}
  \item[Non-triviality] All output values must have been the input value of a proposer.
  \item[Agreement] All proposers that output a value must output the same value.
  \item[Progress] All proposers must eventually output a value.
\end{description}

As termination cannot be guaranteed in an asynchronous system where failures may occur~\cite{fischer_jacm85}, consensus algorithms need only guarantee progress assuming partial synchrony~\cite{dwork_jacm88}.



If we have only one acceptor, then solving consensus is straightforward.
Assume the acceptor has a single write-once register, $r_0$, to store the decided value.
A \emph{write-once register} is a persistent variable that once written cannot be modified.
Proposers send requests to the acceptor with their input value.
If $r_0$ is unwritten, the value received is written to $r_0$ and is returned to the proposer.
If $r_0$ is already written, then the value in register $r_0$ is read and returned to the proposer.
The proposer then outputs the returned value.
This algorithm achieves consensus but requires the acceptor to be available for proposers to terminate.
Overcoming this limitation requires the deployment of more than one acceptor, so we now consider how to generalize to multiple acceptors.

\section{Abstract solution to distributed consensus}
\label{sec:abstract}

Consider a finite set of acceptors, $\{a_0,a_1,\dots,a_n\}$, where each acceptor has an infinite series of write-once registers, $\{r_0,r_1,\dots\}$.
At any time, each register is in one of the three states:
\begin{itemize}
  \item \textbf{unwritten}, the starting state for all registers; 
  \item \textbf{contains a value}, e.g.,~A, B, C; or
  \item \textbf{contains \emph{nil}}, a special value denoted as $\bot$.
\end{itemize}

\begin{figure}
  \centering
      \begin{subfigure}[b]{0.45\textwidth}
    \centering
    \begin{tabular}{ |l|l|}
      \hline
      \boldmath$i$ & \boldmath$Q_i$\\
      \hline
      $0, 1,\dots$ & $\{\{a_0,a_1\}, \{a_0,a_2\}, \{a_1,a_2\}\}$ \\
      \hline
    \end{tabular}
    \caption{Majority quorums over 3 acceptors ($\{a_0,a_1,a_2\}$).}
    \label{fig:example_configs/3s_paxos}
    \end{subfigure}
    
    \begin{subfigure}[b]{0.45\textwidth}
      \centering
      \begin{tabular}{ |l|l| }
        \hline
        \boldmath$i$ & \boldmath$Q_i$ \\
        \hline
        $0$ & $\{\{a_0,a_1,a_2\}\}$ \\
        $1, 2, \dots $ & $\{\{a_0,a_1\}, \{a_0,a_2\}, \{a_1,a_2\}\}$ \\
        \hline
      \end{tabular}
      \caption{Quorums can vary by round. Round 0 uses all 3 acceptors, round 1 onwards uses majority quorums.}
      \label{fig:example_configs/3s}
    \end{subfigure}
    \begin{subfigure}[b]{0.45\textwidth}
    \centering
    \begin{tabular}{ |l|l|}
      \hline
      \boldmath$i$ & \boldmath$Q_i$ \\
      \hline
      $0, 2,\dots$ & $\{\{a_0,a_1\}\}$ \\
      $1, 3,\dots$ & $\{\{a_2,a_3\}\}$\\
      \hline
    \end{tabular}
    \caption{Quorums do not need to intersect. Even rounds use two acceptors and odd rounds using the other two acceptors.}
    \label{fig:example_configs/4s_disjoint}
    \end{subfigure}

  \caption{Sample quorum configurations.}
  \label{fig:example_configs}
\end{figure}

\begin{figure}
  \centering
    \begin{subfigure}[b]{0.23\textwidth}
      \centering
        \begin{tabular}{ |c |c c c|  }
          \hline
             & \boldmath{$a_0$} & \boldmath{$a_1$} & \boldmath{$a_2$} \\
          \hline
          \boldmath{$r_0$} & $\bot$ & $\bot$ & B\\
          \boldmath{$r_1$} & $\bot$ & $\bot$ & $\bot$\\
          \boldmath{$r_2$} &  & \cellcolor{gray!20}A & \cellcolor{gray!20}A \\
          
        \end{tabular}
      \subcaption{No decision in round 0 and 1. Value A decided in round $2$}
      \label{fig:example_state/3s/a}
      \end{subfigure}
      \begin{subfigure}[b]{0.23\textwidth}
        \centering
        \begin{tabular}{ |c |c c c|  }
          \hline
               & \boldmath{$a_0$} & \boldmath{$a_1$} & \boldmath{$a_2$} \\
            \hline
            \boldmath{$r_0$} & $\bot$ & \cellcolor{gray!20}A & \cellcolor{gray!20}A \\
            \boldmath{$r_1$} & \cellcolor{gray!20}A & \cellcolor{gray!20}A & \\
           
          \end{tabular}
        \subcaption{Value A decided in round $0$ and round $1$}
        \label{fig:example_state/3s/b}
        \end{subfigure}

  \caption{Sample state tables for a system using majority quorums (Figure~\ref{fig:example_configs/3s_paxos}).}
  \label{fig:example_state}
\end{figure}

A quorum, $Q$, is a non-empty subset of acceptors, such that if all acceptors have the same (non-nil) value $v$ in the same register $r_i$ then value $v$ is said to be \emph{decided}.
The state of each \emph{round}, $i \in \mathbb{N}_{0}$, is the set comprised of the register $r_i$ from each acceptor.
Each round $i$ is configured with a set of quorums, $\mathcal{Q}_i$, and some examples are given in Figure~\ref{fig:example_configs}.
The state of all registers across the acceptors can be represented in a table, known as a \emph{state table}, where each column represents the state of one acceptor and each row represents a register.
By combining a configuration with a state table, we can determine whether any decision(s) have been reached, as shown in Figure~\ref{fig:example_state}.

\begin{figure}
  \begin{tcolorbox}
  \textbf{Rule 1: Quorum agreement}. A proposer may only output a value $v$ if it has read $v$ from register $r_i$ on a quorum of acceptors $Q \in \mathcal{Q}_i$.

  \textbf{Rule 2: New value}. A proposer may only write a (non-nil) value $v$ provided that either $v$ is the proposer's input value or that the proposer has read $v$ from a register.
  
  \textbf{Rule 3: Current decision}. A proposer may only write a (non-nil) value $v$ to register $r_i$ provided that no value $v'$ where $v \neq v'$ can also be decided in round $i$.
  
  \textbf{Rule 4: Previous decisions}. A proposer may only write a (non-nil) value $v$ to register $r_i$ provided no value $v'$ where $v \neq v'$ can be decided in rounds $0$ to $i-1$.

  \end{tcolorbox}
  \caption{The four rules for correctness.}
  \label{fig:rules}
\end{figure}

Figure~\ref{fig:rules} describes an abstract solution to consensus by giving four rules governing how proposers interact with registers to ensure that the safety requirements (non-triviality and agreement) for consensus are satisfied. See Appendix~\ref{sec:proofone} for a proof.

Rule 1 (\emph{quorum agreement}) ensures that proposers only output values that have been decided.
Rule 2 (\emph{new value}) ensures that only proposer input values can be written to registers thus only proposer input values can be decided and output by proposers.
Rules 3 and 4 ensure that no two quorums can decide upon different values.
Rule 3 (\emph{current decision}) ensures that all decisions made in a round will be for the same value whilst Rule 4 (\emph{previous decisions}) ensures that all decisions made by different rounds are for the same value.

Note that none of the four rules restrict when a proposer can write \emph{nil} ($\bot$) to a register.
The \emph{nil} value ensures that proposers can always safely write to any register.
This allows proposers to block quorums from making decisions if the proposer is unable to utilize the quorum due to Rule 3 or 4.
The \emph{nil} value is not necessary for safety, however, we will see later on how algorithms such as Paxos can utilize \emph{nil} to satisfy the progress requirement of consensus (\S\ref{sec:paxos}).


Rules 1 and 2 are easy to implement, but Rules 3 and 4 require more careful treatment.


\subsection{Satisfying rule three}

We can satisfy Rule 3 (\emph{current decision}) if we require that all (non-nil) values written to a given round are the same.
This can be achieved by assigning rounds to proposers in a round-robin fashion (for instance, with three proposers, $p_0$ can write to round 0, 3, 6, .. and $p_1$ can write to round 1, 4, 7, .. and so on) and requiring that proposers write at most one (non-nil) value to each of their own rounds.
This approach ensures that at most one (non-nil) value is ever written to each round and therefore at most one value can be decided by each round. 

\subsection{Satisfying rule four}

Rule 4 (\emph{previous decisions}) requires proposers to ensure that, before writing a (non-nil) value, previous rounds cannot decide a different value.
This is trivially satisfied for round $0$, however, more work is required by proposers to satisfy this rule for subsequent rounds.

Assume each proposer maintains its own local copy of the state table.
Initially, each proposer's state table is empty as they have not yet learned anything regarding the state of the acceptors.
A proposer can populate its state tables by reading registers and storing the results in its copy of the state table.
Since the registers are persistent and write-once, if a register contains a value (nil or otherwise) then any reads will always remain valid.
Each proposer's state tables will therefore always contain a subset of the values from the state table.

From its local state table, each proposer can track whether decisions have been reached or could be reached by previous quorums, using a \emph{decision table}. 
At any given time, each quorum is in one of four decision states:
\begin{description}
  \item[\textsc{Any}:] Any value could be decided by this quorum.
  \item[\textsc{Maybe}~\boldmath{$v$}:] If this quorum reaches a decision, then value $v$ will be decided.
  \item[\textsc{Decided}~\boldmath{$v$}:] The value $v$ has been decided by this quorum; a final state.
  \item[\textsc{None}:] This quorum will not decide a value; a final state.
\end{description}

\begin{figure}
  \centering
  \begin{footnotesize}
  \begin{tikzpicture}[node distance=1.2cm,every state/.style={thick, fill=gray!10}]
    \node[state,initial] (any) {\textsc{Any}};
    \node[state, right of=any,xshift=2cm,align=center] (maybe) {\textsc{Maybe} \\ $v$};
    \node[state, right of=maybe,xshift=1cm,yshift=2cm] (no) {\textsc{None}};
    \node[state, below of=no,yshift=-2cm,align=center] (yes) {\textsc{Decided} \\ $v$};

    \draw[->,thick,above,align=center,pos=0.5] (any) edge node{$\exists i'\geq i: a[i']=v$} (maybe);

    \draw[->,thick,below left ,align=center,pos=0.75, bend right] (any) edge node{$\forall a\in Q: a[i]=v$} (yes);
    \draw[->,thick,below left,align=left,pos=0.9] (maybe) edge node{$\forall a\in Q: a[i]=v$} (yes);

    \draw[->,thick,above left ,align=center,pos=0.75, bend left] (any) edge node{$\exists a\in Q: a[i]=\bot$} (no);
    \draw[->,thick,below right,align=left,pos=0.5] (maybe) edge node{$\exists a\in Q: a[i]=\bot$ \\ $\exists i'>i, v' \neq v: a[i']=v'$} (no);

  \end{tikzpicture}
\end{footnotesize}
  \caption{Computing the decisions state for quorum $Q \in \mathcal{Q}_i$ over round $i$. We use $a[i]=v$ as shorthand for when a proposer has read value $v$ from register $r_i$ on acceptor $a$. }
  \label{fig:decision_state_machine}
\end{figure}
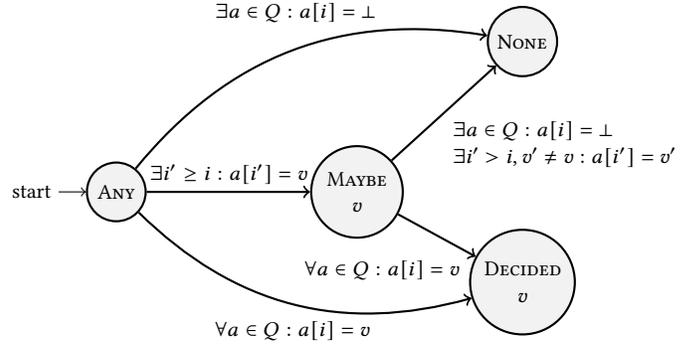

The rules for updating the decision table are detailed below and in Figure~\ref{fig:decision_state_machine}.
Initially, the decision state of all quorums is \textsc{Any}.
If there is a quorum where all registers contain the same (non-nil) value $v$ then its decision state is \textsc{Decided}~$v$.
When a proposer reads \emph{nil} from register $r_i$ on acceptor $a$ then for all quorums $Q \in \mathcal{Q}_i$ where $a \in Q$, the decision state \textsc{Any}/\textsc{Maybe}~$v$ becomes \textsc{None}.
When a proposer reads a non-\emph{nil} value $v$ from a register $r_i$ then for all quorums over rounds $0$ to $i$, the decision state \textsc{Any} becomes \textsc{Maybe}~$v$ and \textsc{Maybe}~$v'$ where $v\neq v'$ becomes \textsc{None}.
These rules use the knowledge that if a proposer reads a (non-nil) value $v$ from the register $r_i$ on acceptor $a$, it learns that all quorums in the round $i$ must decide $v$ if they reach a decision (Rule 3), and if any quorum of rounds $0$ to $i-1$ reaches a decision then value $v$ is decided (Rule 4).

\begin{figure}
  \begin{tcolorbox}
    \textbf{Rule 1: Quorum agreement}. A proposer may output value $v$ provided at least one quorum state is \textsc{Decided}~$v$.
    
    \textbf{Rule 2: New value}. A proposer $p$ may write a non-nil value $v$ to a register provided $v$ is $p$'s input value or has been read from a register.
    
    \textbf{Rule 3: Current decision}. A proposer $p$ may write a non-nil value $v$ to a register $r_i$ provided round $i$ has been allocated to $p$ but not yet used.
    
    \textbf{Rule 4: Previous decisions}. A proposer may write a non-nil value $v$ to register $r_i$ provided the decision state of each quorum from rounds $0$ to $i-1$ is \textsc{None}, \textsc{Maybe}~$v$ or \textsc{Decided}~$v$.

  \end{tcolorbox}
  \caption{The four rules for correctness using proposer decision tables.}
  \label{fig:decisiontable}
\end{figure}

Figure~\ref{fig:decisiontable} describes how proposers can use state tables and decision tables to implement all four rules for correctness (Fig.~\ref{fig:rules}).

\section{Relaxed Paxos}
\label{sec:paxos}

\begin{figure*}
  \centering
  \begin{tikzpicture}
    \draw [-] (0.5,1) node[left] {$p_0$} -- (3.5,1);
    \draw [-] (5,1) node[left] {$p_1$} -- (10,1);
    \draw [->] (-2,0.5) node[left] {$a_0$} -- (12,0.5);
    \draw [->] (-2,0) node[left] {$a_1$} -- (12,0);
    \draw [->] (-2,-0.5) node[left] {$a_2$} -- (12,-0.5);

    \draw [->,thick] (1,1) node[rotate=50,above,anchor=west] {$\langle\textsc{P2a},0,A\rangle$} -- (1.5,0.5);
    \draw [<-,thick] (2,1) node[rotate=50,above,anchor=west] {$\langle\textsc{P2b},0,A\rangle$} -- (1.5,0.5);
    \draw [->,thick] (1,1) -- (2,0);
    \draw [<-,thick] (3,1) node[rotate=50,above,anchor=west] {$\langle\textsc{P2b},0,A\rangle$} -- (2,0);

    \draw [->,thick] (1+4.5,1) node[rotate=50,above,anchor=west] {$\langle\textsc{P1a},1\rangle$} -- (1.5+4.5,0.5);
    \draw [<-,thick] (2+4.5,1) node[rotate=50,above,anchor=west] {$\langle\textsc{P1b},1,\{r_0:A\}\rangle$} -- (1.5+4.5,0.5);

    \draw [->,thick] (4+3.5,1) node[rotate=50,above,anchor=west] {$\langle\textsc{P2a},1,A\rangle$} -- (4.5+3.5,0.5);
    \draw [<-,thick] (5+3.5,1) node[rotate=50,above,anchor=west] {$\langle\textsc{P2b},1,A\rangle$} -- (4.5+3.5,0.5);
    \draw [->,thick] (4+3.5,1) -- (5+3.5,0);
    \draw [<-,thick] (6+3.5,1) node[rotate=50,above,anchor=west] {$\langle\textsc{P2b},1,A\rangle$} -- (5+3.5,0);
  \end{tikzpicture}
  \caption{Sample message exchange for Relaxed Paxos between two proposers ($\{p_0,p_1\}$) and three acceptors ($\{a_0,a_1,a_2\}$). Majority quorums (Figure ~\ref{fig:example_configs/3s_paxos}) are used. Some messages are omitted for simplicity. Corresponding examples of the proposer's state table and decision table are given in Figure~\ref{fig:paxos_decision} for proposer $p_0$ and Figure~\ref{fig:paxostwo_decision} for proposer $p_1$.}
  \label{fig:paxos_msd}
\end{figure*}
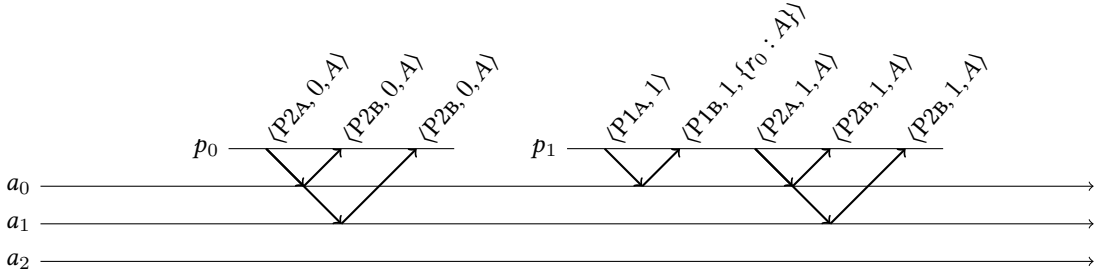
\begin{figure}
  \centering
  \begin{subfigure}[b]{\columnwidth}
    \centering
    \begin{tabular}{|c|ccc|}
      \hline
         & \boldmath{$a_0$} & \boldmath{$a_1$} & \boldmath{$a_2$} \\
      \hline
      \multicolumn{1}{|c|}{\boldmath{$r_0$}} &  &  &   \\
      
    \end{tabular}
    \hspace{0.5cm}
    \begin{tabular}{|l|l|l|}
      \hline
      \boldmath{$i$} & \boldmath{$Q$}& \textbf{Decision} \\
      \hline
      $0$ & $\{a_0,a_1\}$ & \textsc{Any} \\
      $0$ & $\{a_0,a_2\}$ & \textsc{Any} \\
      $0$ & $\{a_1,a_2\}$ & \textsc{Any} \\
 
    \end{tabular}
    \caption{Initial state.}
    \label{fig:paxos_decision/initial}
  \end{subfigure}

  \begin{subfigure}[b]{\columnwidth}
    \centering
    \begin{tabular}{|c|ccc|}
      \hline
         & \boldmath{$a_0$} & \boldmath{$a_1$} & \boldmath{$a_2$} \\
           \hline
      \multicolumn{1}{|c|}{\boldmath{$r_0$}} & A &  &  \\
      
    \end{tabular}
    \hspace{0.5cm}
    \begin{tabular}{|l|l|l|}
      \hline
      \boldmath{$i$} & \boldmath{$Q$}& \textbf{Decision} \\
      \hline
      $0$ & $\{a_0,a_1\}$ & \textsc{Maybe A} \\
      $0$ & $\{a_0,a_2\}$ & \textsc{Maybe A} \\
      $0$ & $\{a_1,a_2\}$ & \textsc{Maybe A} \\
  
    \end{tabular}
    \caption{State after receiving {$\langle\textsc{P2b},0,A\rangle$} from $a_0$.}
    \label{fig:paxos_decision/end}
  \end{subfigure}

  \begin{subfigure}[b]{\columnwidth}
    \centering
    \begin{tabular}{|c|ccc|}
      \hline
         & \boldmath{$a_0$} & \boldmath{$a_1$} & \boldmath{$a_2$} \\
      \hline
      \multicolumn{1}{|c|}{\boldmath{$r_0$}} & \cellcolor{gray!20}A & \cellcolor{gray!20}A &  \\
     
    \end{tabular}
    \hspace{0.5cm}
    \begin{tabular}{|l|l|l|}
      \hline
      \boldmath{$i$} & \boldmath{$Q$}& \textbf{Decision} \\
      \hline
      \cellcolor{gray!20}$0$ & \cellcolor{gray!20}$\{a_0,a_1\}$ & \cellcolor{gray!20}\textsc{Decided A} \\
      $0$ & $\{a_0,a_2\}$ & \textsc{Maybe A} \\
      $0$ & $\{a_1,a_2\}$ & \textsc{Maybe A} \\
      
    \end{tabular}
    \caption{State after receiving {$\langle\textsc{P2b},0,A\rangle$} from $a_1$.}
    \label{fig:paxos_decision/endtwo}
  \end{subfigure}
  
  \caption{Sample proposer state tables (left) and decision tables (right) for proposer $p_0$ during the execution in Figure~\ref{fig:paxos_msd}.}
  \label{fig:paxos_decision}
\end{figure}

\begin{figure}
  \centering
   \begin{subfigure}[b]{\columnwidth}
     \centering
        \begin{tabular}{| c |c c c | }
          \hline
             & \boldmath{$a_0$} & \boldmath{$a_1$} & \boldmath{$a_2$} \\
          \hline
          \multicolumn{1}{|c|}{\boldmath{$r_0$}} &  & &  \\
         
        \end{tabular}
        \hspace{0.5cm}
        \begin{tabular}{ |l| l |l|}
          \hline
      \boldmath{$i$} & \boldmath{$Q$}& \textbf{Decision} \\
          \hline
          $0$ & $\{a_0,a_1\}$ & \textsc{Any} \\
          $0$ & $\{a_0,a_2\}$ & \textsc{Any} \\
          $0$ & $\{a_1,a_2\}$ & \textsc{Any} \\
         
        \end{tabular}
     \caption{Initial state.}
     \label{fig:paxostwo_decision/initial}
   \end{subfigure}

  \begin{subfigure}[b]{\columnwidth}
    \centering
    \begin{tabular}{| c |c c c | }
      \hline
         & \boldmath{$a_0$} & \boldmath{$a_1$} & \boldmath{$a_2$} \\
      \hline
      \multicolumn{1}{|c|}{\boldmath{$r_0$}} & A & &  \\
     
    \end{tabular}
    \hspace{0.5cm}
    \begin{tabular}{ |l |l| l|}
      \hline
      \boldmath{$i$} & \boldmath{$Q$}& \textbf{Decision} \\
      \hline
      $0$ & $\{a_0,a_1\}$ & \textsc{Maybe A} \\
      $0$ & $\{a_0,a_2\}$ & \textsc{Maybe A} \\
      $0$ & $\{a_1,a_2\}$ & \textsc{Maybe A} \\
    
    \end{tabular}
    \caption{State after receiving $\langle\textsc{P1b},1,\{r_0:A\}\rangle$ from $a_0$.}
    \label{fig:paxostwo_decision/midone}
  \end{subfigure}


  \begin{subfigure}[b]{\columnwidth}
     \centering
     \begin{tabular}{ c |c c c | }
      \cline{2-4}
          & \boldmath{$a_0$} & \boldmath{$a_1$} & \boldmath{$a_2$} \\
      \hline
      \multicolumn{1}{|c|}{\boldmath{$r_0$}} & A &  &  \\
      \multicolumn{1}{|c|}{\boldmath{$r_1$}} & A &  &  \\
 
     \end{tabular}
     \hspace{0.5cm}
     \begin{tabular}{ |l l  l |}
      \hline
      \boldmath{$i$} & \boldmath{$Q$}& \textbf{Decision} \\
      \hline
      $0$ & $\{a_0,a_1\}$ & \textsc{Maybe A} \\
      $0$ & $\{a_0,a_2\}$ & \textsc{Maybe A} \\
      $0$ & $\{a_1,a_2\}$ & \textsc{Maybe A} \\
      \hline
      $1$ & $\{a_0,a_1\}$ & \textsc{Maybe A} \\
      $1$ & $\{a_0,a_2\}$ & \textsc{Maybe A} \\
      $1$ & $\{a_1,a_2\}$ & \textsc{Maybe A} \\

     \end{tabular}
     \caption{State after receiving $\langle\textsc{P2b},1,A\rangle$ from $a_0$.}
     \label{fig:paxostwo_decision/end}
  \end{subfigure}
   
     \begin{subfigure}[b]{\columnwidth}
     \centering
     \begin{tabular}{ c |c c c | }
       \cline{2-4}
          & \boldmath{$a_0$} & \boldmath{$a_1$} & \boldmath{$a_2$} \\
       \hline
       \multicolumn{1}{|c|}{\boldmath{$r_0$}} & A &  &  \\
       \multicolumn{1}{|c|}{\boldmath{$r_1$}} & \cellcolor{gray!20}A & \cellcolor{gray!20}A &  \\
       
     \end{tabular}
     \hspace{0.5cm}
     \begin{tabular}{ |l |l | l |}
       \hline
      \boldmath{$i$} & \boldmath{$Q$}& \textbf{Decision} \\
       \hline
       $0$ & $\{a_0,a_1\}$ & \textsc{Maybe A} \\
       $0$& $\{a_0,a_2\}$ & \textsc{Maybe A} \\
       $0$ & $\{a_1,a_2\}$ & \textsc{Maybe A} \\
       \hline
       \cellcolor{gray!20}$1$ & \cellcolor{gray!20}$\{a_0,a_1\}$ & \cellcolor{gray!20}\textsc{Decided A} \\
       $1$ & $\{a_0,a_2\}$ & \textsc{Maybe A} \\
       $1$ & $\{a_1,a_2\}$ & \textsc{Maybe A} \\
       
     \end{tabular}
     \caption{State after receiving $\langle\textsc{P2b},1,A\rangle$ from $a_1$.}
     \label{fig:paxostwo_decision/endtwo}
   \end{subfigure}
   
  \caption{Sample proposer state tables (left) and decision tables (right) for proposer $p_1$ during the execution in Figure~\ref{fig:paxos_msd}.}
  \label{fig:paxostwo_decision}
\end{figure}

We now describe Paxos~\cite{lamport_tcs98} in terms of decision tables.
We refer to our description of Paxos as \emph{Relaxed Paxos} to distinguish it from the usual descriptions of Paxos.
Like Paxos, Relaxed Paxos consists of two phases: phase one and phase two.
In phase one, after choosing a round $i$, a proposer reads from rounds $0$ to $i-1$ until it learns which value $v$ is safe to write to round $i$.
In phase two, the proposer writes the value $v$ to round $i$ and it outputs $v$ once it learns that a quorum of acceptors has value $v$ in register $r_i$.

We now consider each phase of Relaxed Paxos in more detail. 
Note that a proposer can complete phase one as soon as the completion criteria (underlined) has been satisfied.

\subparagraph*{Phase One}
\begin{itemize}
  \item A proposer $p$ chooses its next round $i$. \uline{Once the decision state of all quorums from rounds $0$ to $i-1$ is \textsc{None} or \textsc{Maybe}~$v$ then the proposer $p$ chooses the value $v$ (or if all states are \textsc{None} then its input value) and proceeds to phase two.}
  \item The proposer $p$ sends $\langle\textsc{P1a},i\rangle$ to all acceptors.
  \item Upon receiving $\langle\textsc{P1a},i\rangle$, each acceptor checks if register $r_i$ is unwritten.
  If so, any unwritten registers up to $r_{i-1}$ (inclusive) are set to \emph{nil}.
  The acceptor replies with $\langle\textsc{P1b},i,\mathcal{R}\rangle$ where $\mathcal{R}$ is a set of all written registers.
  \item Each time the proposer $p$ receives a \textsc{P1b}, it updates its state and decision tables accordingly.
  If the proposer $p$ times out before completing phase one, it restarts phase one with a greater round.
\end{itemize}

\subparagraph*{Phase Two}
\begin{itemize}
  \item The proposer $p$ sends $\langle\textsc{P2a},i,v\rangle$ to all acceptors where $i$ is the round chosen at the start of phase one and $v$ is the value chosen at the end of phase one.
  \item Upon receiving $\langle\textsc{P2a},i,v\rangle$, each acceptor checks if register $r_i$ is unwritten.
  If so, any unwritten registers up to $r_{i-1}$ (inclusive) are set to \emph{nil} and register $r_i$ is set to the value $v$.
  The acceptor replies with $\langle\textsc{P2b},i,v\rangle$.
  \item Each time the proposer $p$ receives a \textsc{P2a}, it updates its state and decision tables accordingly.
  Once the decision state of a quorum is \textsc{Decided}~$v$ then the proposer $p$ outputs the value $v$.
  If the proposer $p$ times out before completing phase two, it restarts phase one with a greater round.
\end{itemize}

Once a proposer outputs the decided value, its state and decision tables are no longer needed. 
Optionally, a \textsc{P3a} can be used to notify the acceptors of the decided value, which can be recorded and the registers garbage collected. 

Note that when a proposer restarts Relaxed Paxos after timing out, it does not need to clear its state table and decision table. 
Instead, the proposer can safely reuse what it learned about the state of acceptors in earlier executions. 

Similarly, the proposer can also update its decision table for its previously assigned rounds. For instance, if it did not write a value in an assigned round then the decision state is for all quorums in that round is \textsc{None}. 

Phase two implements Rule 1 by requiring a proposer to wait until it has a quorum in its decision table with decision state \textsc{Decided}~$v$ before it outputs $v$.
Relaxed Paxos ensures Rule 2 as a proposer will only write a (non-nil) value $v$ if $v$ is from a decision state \textsc{Maybe}~$v$ or if $v$ is the proposer's input value.
Each proposer will only execute Relaxed Paxos at most once for each assigned round, thus ensuring Rule 3.
The purpose of phase one is to implement Rule 4, by requiring that a proposer does not write any (non-nil) value $v$ to round $i$ until it has checked that all previous quorums (over rounds $0$ to $i-1$) either cannot reach a decision or will decide the same value $v$.
Before an acceptor sends a \textsc{P1b} in round $i$, it writes \emph{nil} to any unwritten registers from $r_0$ to $r_{i-1}$, effectively blocking previous rounds from deciding new values.

Figure~\ref{fig:paxos_msd} gives an example of the message exchange as two proposers execute Relaxed Paxos with three acceptors.

\section{Implications for Paxos}
\label{sec:quorums}

Relaxed Paxos differs from the usual descriptions of Paxos as it encapsulates various generalizations regarding quorums.

Relaxed Paxos can be configured with any mapping of quorums to rounds.
However, the choice of quorum configuration will impact the conditions under which progress can be guaranteed.
Paxos requires proposers to wait for responses from a quorum of acceptors in each phase of the algorithm and typically utilizes majority quorums as it requires all quorums to intersect, regardless of the round or phase of the algorithm.
Agreement from a majority of acceptors is therefore both necessary and sufficient for a proposer to complete phase one.

Typically, descriptions of Paxos require acceptors to maintain two variables: the last round promised and the last accepted proposal (consisting of a round number and value).
Relaxed Paxos instead uses a set of write-once registers to store accepted values.
The nil value is used to implement phase one, without the need for a secondary set of registers.

\subsection{Known results}

Paxos uses the same quorums system for all phases and rounds.
Instead, Flexible Paxos differentiates between the quorums used for each round and which phase of Paxos the quorum is used for.
$\mathcal{Q}_r^k$ is the set of quorums for phase $k$ of the round $i$.
Flexible Paxos observed that quorum intersection is required only between the phase one quorum for round $i$ and the phase two quorums of rounds $0$ to $i-1$.
More formally, $\forall i \in \mathbb{N}_{0}, \forall i' \in \mathbb{N}_{<i}: \mathcal{Q}_i^1 \cap \mathcal{Q}_{i'}^2 \neq \emptyset$.

We can observe the same result in Relaxed Paxos.
A proposer can always safely proceed to phase two of round $i$ after receiving \textsc{P1b} messages from at least one acceptor in each quorum from rounds $0$ to $i-1$.
This is because, upon receiving $\langle\textsc{P1b},i,\mathcal{R}\rangle$ from acceptor $a$, the proposer learns the contents of registers $r_0$ to $r_{i-1}$ on acceptor $a$ as all these registers will have already been written.
It is therefore the case that, once the proposer has updated its decision table, none of the quorums over rounds $0$ to $i-1$ which contain $a$ will have the decision state of \textsc{Any}.
It is also that case that the decision table will never contain \textsc{Maybe}~$v$ and \textsc{Maybe}~$v'$ for two different values $v \neq v'$ so the proposer will always be able to choose a single safe value to write.

One implication of this result is that no \textsc{P1b} messages are required to complete phase one for round $0$.
This is illustrated in Figure~\ref{fig:paxos_decision/initial} where the proposer $p_0$ was safe to proceed directly to phase two from startup.


\subsection{New results}

Relaxed Paxos further improves over Flexible Paxos as it also allows the proposer to safely proceed to phase two before hearing from a phase one quorums of acceptors.

We observe that it is possible for a proposer to proceed to phase two of round $i$ without receiving a \textsc{P1b} messages from at least one acceptor in each quorum from rounds $0$ to $i-1$.
If a proposer learns that a register $r_i$ contains a (non-nil) value $v$ then it also learns that if any quorums from rounds $0$ to $i$ reach a decision then $v$ must be chosen.
By updating their decision table, we observe that it is no longer necessary for the proposer in phase one to intersect with the phase two quorums of registers up to $r_i$ (inclusive).
This is illustrated in Figure~\ref{fig:paxostwo_decision/midone} where the proposer could safely proceed to phase two after one \textsc{P1b} message as the proposer reads a non-nil value from the predecessor round.

We also observe that the value selection rule (the condition that governs which values a proposer can safely write in phase two) is weaker in Relaxed Paxos than in the original Paxos protocol. Paxos permits a proposer to propose its input value in phase two only if it did not learn of any values in phase one. Relaxed Paxos also allows a proposer to propose its input value (or more generally, any value) if it knows that all values it has already learned cannot have been decided. This could be because another acceptor in each possible quorum has written nil to its register for that round.







\section{Conclusion}
\label{sec:conc}

Paxos has long been the \emph{de facto} approach to reaching consensus, however, it is also notoriously difficult to understand.
This work is the latest addition in a long line of papers that seek to explain Paxos in simpler terms, however, we are the first to extend Flexible Paxos and further relax the quorum intersection requirements of Paxos.
We have re-framed the problem of consensus in terms of write-once registers and presented an abstract solution to distributed consensus comprised of four rules which an algorithm must abide by in-order to correctly solve consensus.
Utilizing this abstraction, we have presented Relaxed Paxos, which further weakens Paxos's requirements for quorum intersection.

Further exploration of this result and our abstraction solution to consensus can be found in the author's thesis~\cite{howard_thesis} and the extended version of the paper on arxiv~\cite{howard_general19} respectively.



\section{Acknowledgements}
We would like to thank Jon Crowcroft, Stephen Dolan and Martin Kleppmann for their valuable feedback on earlier iterations of this paper.
This work was funded in part by EPSRC EP/R03351X/1 and EP/T022493/1.

\bibliographystyle{plainurl} 
\bibliography{refs}


\appendix
\section{Safety of the four rules for correctness}
\label{sec:proofone}

Figure~\ref{fig:rules} proposes four rules which we claim are sufficient to satisfy the safety (non-triviality and agreement) requirements of distributed consensus.
We now prove each requirement in turn.
We will use $a[i]=v$ to denote that register $r_i$ on acceptor $a$ contains the value $v$ and $a[i]=*$ to denote that register $r_i$ on acceptor $a$ is unwritten.

\begin{lemma}[Satisfying non-triviality]
If a value $v$ is the output of a proposer $p$ then $v$ was the input of some proposer $p'$.
\end{lemma}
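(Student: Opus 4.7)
The plan is to trace the provenance of $v$ back to a proposer's input value by combining Rules~1 and~2. First, Rule~1 (quorum agreement) tells us that since $p$ outputs $v$, the proposer must have read $v$ from register $r_i$ on every acceptor in some quorum $Q \in \mathcal{Q}_i$; in particular, $v$ has been written to at least one register during the execution. So it suffices to show that some proposer wrote $v$ to some register as its input value.

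Next, I would consider the set $W$ of all write events in which some proposer writes the non-nil value $v$ to some register of some acceptor. Because each register is write-once and each acceptor processes writes sequentially, the events in $W$ can be placed in the global real-time order of the execution; let $e \in W$ be the earliest such event, performed by a proposer $q$ on some register $r_j$ of some acceptor $a$. Applying Rule~2 (new value) to $e$ gives two cases: either $v$ is $q$'s input value, or $q$ has previously read $v$ from some register. The second case implies that some register already contained $v$ before $e$ occurred, which in turn implies that a strictly earlier write event in $W$ exists — contradicting the minimality of $e$. Hence $v$ must be $q$'s input value, which establishes the lemma.

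The main obstacle is being precise about what ``earliest'' means in an asynchronous distributed execution. I would resolve this either by appealing to the total real-time order of events in the global history, or, more abstractly, by applying well-founded induction on the causal-predecessor relation among writes of $v$: a write can only depend on a strictly earlier read, which in turn can only observe a strictly earlier write. Once this ordering is pinned down, the argument is elementary and uses only Rules~1 and~2; neither Rule~3 nor Rule~4 is needed for non-triviality, since those rules only constrain \emph{which} value may be written, not whether the value originated from an input.
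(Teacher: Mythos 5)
Your proof is correct and takes essentially the same approach as the paper: Rule~1 places $v$ in some register, and Rule~2 plus an induction over write events traces it back to a proposer's input. The paper phrases this as a forward invariant (``all non-nil registers contain proposer input values,'' preserved by every write), whereas you phrase it as a minimal-counterexample argument on the earliest write of $v$ --- two standard formulations of the same induction, with yours being somewhat more explicit about the well-foundedness of the ordering on writes.
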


\begin{proof}
Assume that the (non-nil) value $v$ was the output of proposer $p$.
According to Rule 1, at least one register must contain $v$.
Consider the invariant that all (non-nil) registers contain proposer input values.
Initially, all registers are unwritten thus this invariant holds.
According to Rule 2, each proposer will only write either their input value or a value copied from another register, thus the invariant will be preserved.
\end{proof}

\begin{lemma}[Satisfying agreement]
If two proposers, $p$ and $p'$, output values, $v$ and $v'$ (respectively), then $v=v'$.
\end{lemma}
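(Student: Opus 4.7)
The plan is to reduce agreement to the stronger invariant that all decided values across all rounds must coincide. By Rule 1, if $p$ outputs $v$ then there is a round $i$ and a quorum $Q\in\mathcal{Q}_i$ with $a[i]=v$ for every $a\in Q$; similarly $v'$ is decided in some round $j$ by a quorum $Q'\in\mathcal{Q}_j$. Without loss of generality I take $i\le j$ and prove $v=v'$ by induction on $j$.

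The base case $j=i$ is discharged by Rule 3. Let $w$ be the first non-nil value written into round $i$ during the execution. Any subsequent writer to round $i$ sees a state in which each $a[i]$ is either unwritten ($*$) or already equal to $w$, so $w$ remains a candidate to be decided in round $i$; Rule 3 therefore forces any further non-nil write in round $i$ to equal $w$. Hence at most one non-nil value is ever written in round $i$, both decisions collapse to $w$, and $v=v'$.

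For the inductive step $j>i$, I would consider the earliest moment at which some non-nil value $w$ is written into round $j$. Since $v$ is decided in round $i$ later in the execution and registers are write-once, at that moment every $a\in Q$ satisfies $a[i]\in\{v,*\}$, so $v$ is still decidable in round $i<j$. Rule 4 then forces $w=v$. Re-applying the base-case argument within round $j$, every non-nil write in round $j$ equals $w=v$, and hence $v'=v$.

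The principal obstacle is making precise what ``a value can be decided in round $i$'' means so that Rules 3 and 4 can be invoked as claimed. I would define it relative to the current state as the existence of a quorum $Q\in\mathcal{Q}_i$ with $a[i]\in\{v,*\}$ for every $a\in Q$, and then exploit monotonicity of writes (unwritten registers may become written, but existing contents never change) to lift Rule 4 from a local, time-of-write obligation into the global invariant above. Once this formalization is in place, both inductive steps reduce to tracking monotone facts about the state table.
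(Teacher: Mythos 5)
Your overall strategy is the same as the paper's: use Rule~1 to place each output value in a decided quorum, split on whether the two rounds coincide, and discharge the equal-round case with Rule~3 and the unequal-round case with Rule~4. Your formalization of ``$v$ can be decided in round $i$'' as the existence of $Q \in \mathcal{Q}_i$ with $a[i] \in \{v,*\}$ for every $a \in Q$, combined with write-once monotonicity, is exactly the content the paper leaves implicit, and your $j>i$ step uses it correctly. (The induction on $j$ is not actually needed: your inductive hypothesis is never invoked, and a single application of Rule~4 at the moment $v'$ is written to round $j$ suffices, which is what the paper does.)

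There is, however, a genuine error in your base case. You claim that after the first non-nil value $w$ is written to round $i$, every subsequent writer ``sees a state in which each $a[i]$ is either unwritten or already equal to $w$, so $w$ remains a candidate to be decided.'' This ignores \emph{nil}: the rules place no restriction on when $\bot$ may be written, and once enough registers in round $i$ contain $\bot$, every quorum containing an acceptor holding $w$ may be blocked, at which point $w$ is no longer decidable and Rule~3 permits a \emph{different} non-nil value to be written elsewhere in round $i$. Your intermediate claim that at most one non-nil value is ever written in round $i$ is therefore false in general (with five acceptors and majority quorums, $a_0[i]=w$ and $a_1[i]=a_2[i]=a_3[i]=\bot$ leave round $i$ undecidable, so a different value may legally be written to $a_4[i]$). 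The lemma survives because it only concerns \emph{decided} values: anchor the argument on the quorum $Q$ that actually decides $v$. Every $a\in Q$ satisfies $a[i]\in\{v,*\}$ at all times, so $v$ is permanently decidable via $Q$; hence at the moment any acceptor of $Q'$ has $v'$ written to its round-$i$ register, Rule~3 forces $v'=v$. This is precisely the monotonicity argument you already deploy correctly in the $j>i$ case.
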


\begin{proof}
Assume that the (non-nil) value $v$ was the output of proposer $p$ and that the (non-nil) value $v'$ was the output of proposer $p'$.
According to Rule 1, it must be the case that $\exists i \in \mathbb{N}_{0}, \exists Q \in \mathcal{Q}_i,\forall a \in Q: a[i]=v$ and $\exists i' \in \mathbb{N}_{0}, \exists Q' \in \mathcal{Q}_i,\forall a' \in Q': a'[i']=v'$.
Since rounds are totally ordered, it must be the case that either $i=i'$, $i < i'$ or $i > i'$.
We now consider each case in turn:
\begin{description}
  \item[Case] $i=i'$:\\
  According to Rule 3, a proposer will only write $v$ to round $i$ after ensuring no quorum in round $i$ can reach a different decision.
  Thus $v=v'$.
  \item[Case] $i < i'$:\\ According to Rule 4, a proposer will only write $v'$ to round $i'$ after ensuring no quorum in round $i$ can reach a different decision. Thus $v=v'$.
  \item[Case] $i > i'$:\\ This is the same as case $i < i'$ with $i$ and $i'$ swapped. Thus $v=v'$.
\end{description}
\end{proof}

\section{Safety of decision tables rules}
\label{subsec:abstract/correctness}

We have shown that the four rules for correctness are sufficient to satisfy the safety (non-triviality and agreement) requirements of consensus.
We will now show that the proposer decision table rules (Fig.~\ref{fig:decisiontable}) implement the four rules for correctness (Fig.~\ref{fig:rules}) and thus satisfies the safety requirements of consensus.

\begin{lemma}[Satisfying Rule 1]
If the value $v$ is the output of proposer $p$ then $p$ has read $v$ from a register $r_i$ on a quorum of acceptors $Q \in \mathcal{Q}_i$.
\end{lemma}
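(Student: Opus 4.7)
The plan is to invoke the decision-table version of Rule 1 (Fig.~\ref{fig:decisiontable}) and then unpack what it means for a quorum to be in state \textsc{Decided}~$v$, by tracing the transitions of the decision-state machine in Fig.~\ref{fig:decision_state_machine}.

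First, I would assume that proposer $p$ outputs the value $v$. By the decision-table Rule 1, this can only happen if $p$'s decision table contains some quorum $Q \in \mathcal{Q}_i$ (for some round $i$) whose decision state is \textsc{Decided}~$v$. The goal then reduces to showing that whenever a quorum $Q$ is placed in state \textsc{Decided}~$v$ in $p$'s table, $p$ must have read $v$ from register $r_i$ on every acceptor $a \in Q$.

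Next, I would argue by inspection of Fig.~\ref{fig:decision_state_machine}: the only incoming transitions to \textsc{Decided}~$v$ come from \textsc{Any} and from \textsc{Maybe}~$v$, and in both cases the transition is guarded by the condition $\forall a \in Q: a[i] = v$, which by the shorthand defined in the figure caption means that $p$ has read $v$ from $r_i$ on every acceptor in $Q$. Since both final states \textsc{Decided}~$v$ and \textsc{None} are absorbing, once $Q$ enters \textsc{Decided}~$v$ it stays there, and the property that $p$ has read $v$ from $r_i$ on all of $Q$ is preserved (the register values $p$ has already read are never retracted, since registers are write-once and $p$'s local state table only grows).

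The conclusion is then immediate: there exists a round $i$ and a quorum $Q \in \mathcal{Q}_i$ such that $p$ has read $v$ from $r_i$ on every $a \in Q$, which is exactly the condition required by Rule 1 of Fig.~\ref{fig:rules}. I don't expect a real obstacle here; the only thing to be careful about is to enumerate all transitions into \textsc{Decided}~$v$ (there are only two in Fig.~\ref{fig:decision_state_machine}) and to note that the guarding condition on each of them literally restates Rule 1's hypothesis, so the lemma follows by a one-step case analysis rather than any inductive argument.
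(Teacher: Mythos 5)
Your proposal is correct and follows essentially the same route as the paper's proof: invoke the decision-table version of Rule 1 to obtain a quorum $Q \in \mathcal{Q}_i$ in state \textsc{Decided}~$v$, then observe that this state is only reachable when $\forall a \in Q: a[i]=v$, i.e., $p$ has read $v$ from $r_i$ on every acceptor in $Q$. Your explicit enumeration of the two incoming transitions and the remark about absorbing states merely spells out what the paper states in one line.
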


\begin{proof}
Assume the value $v$ is the output of the proposer $p$.
There must exist a round $i$ and quorum $Q \in \mathcal{Q}_i$ in the decision table of $p$ with the status \textsc{Decided}~$v$ (Fig.~\ref{fig:decisiontable}).
A quorum $Q$ can only reach decision state \textsc{Decided}~$v$ if $\forall a \in Q: a[i]=v$.
\end{proof}

\begin{lemma}[Satisfying Rule 2]
If the value $v$ is written by a proposer $p$ then either $v$ is $p$'s input value or $v$ has been read from some register.
\end{lemma}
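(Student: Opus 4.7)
The plan is to observe that the decision-table version of Rule 2 (Fig.~\ref{fig:decisiontable}) is essentially a direct restatement of the original Rule 2 (Fig.~\ref{fig:rules}), so the proof reduces to appealing to that rule together with the fact that nil writes are unrestricted. I would start by assuming that a proposer $p$ writes a value $v$ to some register, and split on whether $v = \bot$ or $v$ is a non-nil value. The nil case is immediate because the statement of the lemma, matching the original Rule 2, only concerns proposer input values or values read from registers — and throughout the paper the four rules for correctness never restrict nil writes, so this case is vacuous with respect to the conclusion (or, if one prefers, we restrict attention to non-nil writes as the original Rule 2 does).

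For the non-nil case, I would invoke the decision-table version of Rule 2 directly: $p$ can only have written the non-nil value $v$ if the guard on that rule was satisfied, which is precisely the condition that $v$ is $p$'s input value or $v$ has been read from a register. This gives exactly the two alternatives in the lemma's conclusion.

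I do not expect any real obstacle here: the lemma is a straightforward translation between the two formulations of Rule 2, and the only subtlety is making explicit that the decision-table discipline is what a proposer follows, so its writes are governed by that discipline's guard. I would close by noting that, because reads populate the state table with values actually stored on acceptors and the state table only ever contains a subset of the true register contents (as established in \S\ref{sec:abstract}), the phrase ``read from a register'' has the same meaning in both formulations, and hence the implementation faithfully satisfies the original Rule 2.
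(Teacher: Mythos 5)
Your proof is correct and follows essentially the same route as the paper's: both reduce the lemma to a direct appeal to the decision-table formulation of Rule~2 in Figure~\ref{fig:decisiontable}, which by construction only permits a proposer to write a non-nil value that is its input or has been read from a register. The extra remarks on nil writes and the faithfulness of the state table are harmless elaborations the paper omits.
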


\begin{proof}
Assume the value $v$ has been written by proposer $p$.
According to Figure~\ref{fig:decisiontable}, $v$ must be either the input value of $p$ or read from some register.
\end{proof}

\begin{lemma}[Satisfying Rule 3]
If the values $v$ and $v'$ are decided in round $i$ then $v=v'$.
\end{lemma}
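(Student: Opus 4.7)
The plan is to argue that round $i$ has a single ``owner'' proposer that writes at most one non-nil value to any register $r_i$ on any acceptor, so any two decisions in round $i$ must agree. This reduces the lemma to tracking which proposer could possibly write a non-nil value to $r_i$.

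First I would unfold the definition of ``decided in round $i$'' via Rule 1 of Figure~\ref{fig:decisiontable}: if $v$ is decided in round $i$ then some quorum $Q \in \mathcal{Q}_i$ has $a[i]=v$ for every $a \in Q$, and similarly some quorum $Q' \in \mathcal{Q}_i$ has $a'[i]=v'$ for every $a' \in Q'$. Since quorums are non-empty, there exist acceptors $a \in Q$ and $a' \in Q'$ such that $a[i]=v$ and $a'[i]=v'$; hence both $v$ and $v'$ were written to some register $r_i$ by some proposers.

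Next I would invoke Rule 3 of Figure~\ref{fig:decisiontable}: a proposer may write a non-nil value to $r_i$ only if round $i$ has been allocated to that proposer and not yet used. Combined with the round-robin assignment of rounds to proposers described in the ``Satisfying rule three'' paragraph, round $i$ is allocated to exactly one proposer $p_i$, and $p_i$ writes at most one non-nil value, call it $w$, across all of its \textsc{P2a} messages for round $i$. Since acceptors only copy the value carried in a \textsc{P2a} message into $r_i$, every non-nil entry in the column of register $r_i$ equals $w$. Therefore $v = w = v'$, which establishes the lemma.

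The only mildly subtle step is justifying that ``at most one non-nil value is written to $r_i$ across all acceptors'': Rule 3 as stated only constrains a single proposer's action, so I must combine it with the allocation scheme (each round is owned by one proposer) and with the observation that a proposer writes the same value $v$ chosen at the end of its phase one to \emph{every} \textsc{P2a} it sends for round $i$. This last fact follows directly from the Phase Two description in Section~\ref{sec:paxos}, so no additional machinery is required.
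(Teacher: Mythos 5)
Your proof is correct and follows essentially the same route as the paper's: both reduce the lemma to the fact that round $i$ is allocated to a single proposer which writes at most one non-nil value, so every non-nil entry in register $r_i$ agrees. You simply spell out more explicitly the step the paper leaves implicit (that the unique owner sends the same value in all its \textsc{P2a} messages), which is a fair elaboration rather than a different argument.
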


\begin{proof}
Assume the values $v$ and $v'$ are decided in round $i$ and therefore there must be at least one acceptor $a$ where $a[i]=v$ and one acceptor $a'$ where $a[i]=v'$.
At most one proposer is assigned round $i$ and the proposer will only write one (non-nil) value to $i$ (Fig.~\ref{fig:decisiontable}) and so round $i$ will only contain one (non-nil) value thus $v=v'$.
\end{proof}

\begin{lemma}[Satisfying Rule 4]
If the value $v$ is decided in round $i$ and the (non-nil) value $v'$ is written to round $i'$ where $i<i'$ then $i=i'$
\end{lemma}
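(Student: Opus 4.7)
The conclusion should read $v = v'$, as the stated $i = i'$ contradicts the hypothesis $i < i'$ and is presumably a typo; so read, the lemma mirrors the preceding Rule~3 lemma but across distinct rounds. The plan is to establish a global invariant by induction and then apply it.

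I would introduce the invariant: \emph{for every acceptor $a$ and round $j$, if $a[j] = v''$ with $v'' \neq \bot$, then every quorum $Q' \in \mathcal{Q}_k$ with $k \leq j$ that ever decides, decides $v''$.} The proof is by induction on the sequence of non-nil write events (the base case is vacuous since every register starts unwritten). For the inductive step, consider a proposer $p'$ performing a non-nil write of $v''$ to register $r_j$. By the decision-table rule for Rule~4 (Fig.~\ref{fig:decisiontable}) combined with the state machine of Figure~\ref{fig:decision_state_machine}, for every quorum $Q' \in \mathcal{Q}_k$ with $k < j$, $p'$'s state for $Q'$ is one of \textsc{None}, \textsc{Maybe}~$v''$, or \textsc{Decided}~$v''$. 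I would discharge each in turn: \textsc{Decided}~$v''$ is immediate from its definition; \textsc{Maybe}~$v''$ arose from $p'$ reading $v''$ from some register at round $\geq k$, and the induction hypothesis applied to that register forces any decision by $Q'$ to be $v''$; \textsc{None} is reached either because some $a \in Q'$ has $a[k] = \bot$ (in which case $Q'$ can never reach unanimous agreement on a non-nil value), or from state \textsc{Maybe}~$v'''$ by reading a value $v'''' \neq v'''$ at a higher round on some acceptor in $Q'$, whereupon combining the induction hypothesis at that higher round (forcing $Q'$ to decide $v''''$ if it decides) with the earlier \textsc{Maybe}~$v'''$ evidence (forcing $Q'$ to decide $v'''$ if it decides) produces a contradiction, so $Q'$ cannot decide at all. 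The quorums in $\mathcal{Q}_j$ at the same round are handled by the already-proved Rule~3 lemma, which forces any non-nil value written at round $j$ to equal $v''$.

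With the invariant established, the lemma is immediate: apply it to the register at round $i'$ holding $v'$ and to the quorum $Q \in \mathcal{Q}_i$ on which $v$ is decided; since $i < i'$, the invariant yields $v = v'$.

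The main obstacle I expect is the second disjunct of the \textsc{Maybe}~$v''' \to$ \textsc{None} transition, i.e.\ justifying that a conflicting higher-round value really does forbid $Q'$ from ever deciding. This is the one step where the reasoning is genuinely inductive rather than local to the state machine, and it is precisely the place where the state-machine bookkeeping is tied back to Rule~4 of Figure~\ref{fig:rules}.
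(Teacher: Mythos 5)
Your reading of the conclusion as $v=v'$ is right; the $i=i'$ in the statement is a typo. Your proof is correct, but it is packaged differently from the paper's. The paper anchors its induction to the particular decision: it fixes the quorum $Q\in\mathcal{Q}_i$ that decided $v$ and proves, by induction over the non-nil writes to rounds above $i$ (split into an explicit base-case lemma and an inductive-case lemma), that every such write must equal $v$. You instead prove a strengthened global invariant --- every non-nil value held in a round-$j$ register pins the decision of every quorum at rounds $\le j$ --- and obtain the lemma as a one-line corollary. The internal case split on \textsc{None}/\textsc{Maybe}/\textsc{Decided} supplied by Rule 4 of Figure~\ref{fig:decisiontable} is the same in both, but anchoring to the decided quorum lets the paper dismiss outright the cases you have to work for: no acceptor of $Q$ can hold \emph{nil} in $r_i$, and under its induction hypothesis no conflicting value exists above round $i$, so the paper never needs your double application of the invariant in the \textsc{Maybe}$\to$\textsc{None} subcase. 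What your version buys is generality: your invariant is precisely the soundness of the \textsc{Maybe}/\textsc{None} decision-table semantics, it does not presuppose that any decision has occurred, and it would also directly justify the weakened phase-one completion conditions of Section~\ref{sec:quorums}. Two minor points: the conflicting higher-round value that triggers \textsc{Maybe}$\to$\textsc{None} need not be read from an acceptor in $Q'$ (the prose allows any acceptor), though your argument does not depend on this; and, like the paper, you should acknowledge that the induction runs over a linearization of the only partially ordered writes, which is legitimate because every appeal to the hypothesis is to a write the proposer has already read and which therefore causally precedes the current one.
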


We will prove this by induction over the writes to rounds $>i$.
Note that this proof assumes only a partial order over writes as there may be concurrent writes to different acceptors.

\begin{lemma}[Satisfying Rule 4 - Base case]
If the value $v$ is decided in round $i$ then the first (non-nil) value to be written to a round $i'$ where $i<i'$ is $v$.
\end{lemma}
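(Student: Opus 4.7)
My plan is to identify the first chronological non-nil write $W$ to any round strictly greater than $i$ and apply Rule~4 (Fig.~\ref{fig:rules}) to it. Suppose $W$ writes a value $v'$ to register $r_{i'}$ for some $i' > i$; the goal is to deduce $v' = v$. The reason for singling out the \emph{first} such write is that, until $W$ occurs, no round above $i$ contains any non-nil value, so in particular the hypothetical ``bad'' write we must rule out is $W$ itself.

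Having fixed $W$, I would first unpack what the hypothesis ``value $v$ is decided in round $i$'' gives us: there is a quorum $Q\in\mathcal{Q}_i$ such that $a[i]=v$ for every $a\in Q$ in the final state. Because registers are write-once, for each $a\in Q$ either $a[i]=v$ already at the time of $W$ or $a[i]$ is unwritten at the time of $W$ (and is written to $v$ later). In particular, at the time of $W$ no acceptor in $Q$ has $a[i]=\bot$ and no acceptor in $Q$ has $a[i]=v''$ for $v''\neq v$, since either would be permanent and would block $v$ from being decided by $Q$ later.

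Next I would translate this into the modal statement Rule~4 uses. The actual execution of the system (from the time of $W$ onward) realises a future in which the remaining unwritten registers of $Q$ get written to $v$ and hence $v$ is decided in round $i$; so $v$ \emph{can be decided} in round $i \le i'-1$ at the moment $W$ takes place. Rule~4 forbids writing $v'$ to $r_{i'}$ while any $v''\neq v'$ can still be decided in rounds $0$ to $i'-1$; applying this to $v''=v$ immediately forces $v'=v$.

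The one subtlety I expect to be the main obstacle is giving a crisp meaning to ``can be decided'' in Rule~4, since the abstract rules do not pin down a particular operational semantics. I would resolve this by reading it as reachability in the monotone state of the registers: a value $u$ can be decided in round $j$ iff there is some sequence of further writes, consistent with Rules~1--4, that leads to a quorum in $\mathcal{Q}_j$ all holding $u$. Under this reading, the argument above is essentially a contrapositive: if the first post-$i$ non-nil write were of $v'\neq v$, the still-consistent completion of $Q$ to a decision for $v$ would exhibit a forbidden alternative decision at the moment of that write, contradicting Rule~4.
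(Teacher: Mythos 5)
There is a genuine gap: your argument is circular in the context where this lemma lives. The lemma sits in the appendix whose stated purpose is to show that the \emph{locally checkable} decision-table rules (Fig.~\ref{fig:decisiontable}) implement the abstract rules of Fig.~\ref{fig:rules}. In that setting the only thing you may assume about the proposer performing the write $W$ is that its own decision table showed every quorum of rounds $0$ to $i'-1$ in state \textsc{None}, \textsc{Maybe}~$v'$ or \textsc{Decided}~$v'$. You instead invoke the abstract Rule~4 (``no other value can be decided in rounds $0$ to $i-1$'') as the constraint enforced on $W$ --- but that is a modal, globally quantified condition that no proposer can evaluate from its partial view, and showing that the decision-table check soundly approximates it is precisely what this lemma (together with the inductive case) is supposed to establish. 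If one were allowed to assume abstract Rule~4 directly, the lemma would reduce to the $i<i'$ case of the agreement proof in Appendix~\ref{sec:proofone} and the entire induction would be unnecessary.

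The work your proof skips is the case analysis the paper actually performs: fix the deciding quorum $Q\in\mathcal{Q}_i$, note that write-once registers give $\forall a\in Q: a[i]=v\lor{*}$ at all times, and then ask how $Q$'s entry in $p$'s decision table could have become \textsc{None}, \textsc{Maybe}~$v'$ or \textsc{Decided}~$v'$. \textsc{Decided}~$v'$ and the ``read $v'$ from round $i$'' route to \textsc{Maybe}~$v'$ force $v'=v$; the ``read \emph{nil} from some $a\in Q$ at round $i$'' route to \textsc{None} is impossible by the write-once observation; and the remaining routes (reading $v'$, or two conflicting values, from registers of rounds $>i$) are exactly where the ``first write above round $i$'' hypothesis is needed --- which is also why the statement is phrased as a base case of an induction rather than as a one-shot consequence of Rule~4. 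Your observation about write-once registers and the permanence of $a[i]=\bot$ is correct and is used in the paper, but the bridge from the proposer's local decision table to the conclusion is missing.
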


\begin{proof}
Assume the value $v$ is decided in round $i$ by quorum $Q \in \mathcal{Q}_i$ thus at some time $\forall a \in Q: a[i]=v$.
Since registers are write-once, it will always be that case that $\forall a \in Q: a[i]=v \lor *$.

Assume the value $v'$ is written to round $i'$ by proposer $p$ where $i<i'$.
Assume that $i'$ is the first non-nil value to be written to any register $>i$ thus $p$ cannot read any (non-nil) values from registers $>i$ before writing $v'$. 
We will show that $v=v'$.

Consider the decision table of proposer $p$ when it is writing $v'$ to $i'$.
Since $i<i'$, the decision state of $Q$ must be either \textsc{None}, \textsc{Maybe}~$v'$ or \textsc{Decided}~$v'$ (Fig.~\ref{fig:decisiontable}).
We now consider each case in turn.
\begin{description}
  \item[Case] \textsc{Decided} $v'$:\\
  This decision state requires that $\forall a \in Q: a[i]=v'$.
  Since we know that $\forall a \in Q: a[i]=v \lor *$ then this case can only occur if $v=v'$.
  \item[Case] \textsc{Maybe} $v'$:\\
  This decision state can be reached in one of two ways:
  \begin{description}
    \item[Case] $p$ read $v'$ from register $i$ of some acceptor $a$:
    Since at most value is written to each round, this case can only occur if $v=v'$.
    \item[Case] $p$ read $v'$ from a register $>i$:
    Since $v'$ is the first value to be written to a register $>i$, this case cannot occur.
  \end{description}
  \item[Case] \textsc{None}:\\
  This decision state can be reached in one of two ways:
  \begin{description}
    \item[Case] $p$ read \emph{nil} from register $i$ of some acceptor $a \in Q$:\\
    Since $\forall a \in Q: a[i]=v \lor *$, this case cannot occur.
    \item[Case] $p$ read two different non-nil values from rounds $\geq i$:\\
    Since the proposer can only read $v$ from round $i$ and cannot have read any non-nil values from registers sets $>i$, this case cannot occur.
  \end{description}
\end{description}
Each case either requires that $v=v'$ or cannot occur, therefore it must be case that $v=v'$
\end{proof}

\begin{lemma}[Satisfying Rule 4 - Inductive case]
If the value $v$ is decided in round $i$ and all (non-nil) values written to registers $>i$ are $v$ then the next (non-nil) value to be written to a round $i'$ where $i<i'$ is also $v$.
\end{lemma}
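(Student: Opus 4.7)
The plan is to mirror the structure of the base case almost verbatim, substituting the induction hypothesis for the "no non-nil writes above $i$ yet" reasoning that drove the base case. Fix a proposer $p$ that writes a non-nil value $v'$ to register $r_{i'}$ with $i < i'$, and assume this is the next such write after the ones covered by the induction hypothesis. Let $Q \in \mathcal{Q}_i$ be the quorum that decided $v$, so at some earlier time $\forall a \in Q : a[i] = v$; because registers are write-once this gives $\forall a \in Q : a[i] = v \lor *$ permanently. By the decision-table version of Rule 4, at the moment $p$ writes $v'$ the decision state of $Q$ in $p$'s table must be \textsc{None}, \textsc{Maybe}~$v'$, or \textsc{Decided}~$v'$. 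I would dispatch these three cases, aiming to show the first is impossible and the other two force $v = v'$.

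The \textsc{Decided}~$v'$ case goes through exactly as in the base case: it requires $\forall a \in Q : a[i] = v'$, and together with $a[i] \in \{v, *\}$ this forces $v = v'$. The \textsc{Maybe}~$v'$ case splits on where $v'$ was read. If $p$ read $v'$ from some register $r_j$ with $j > i$, the induction hypothesis says every non-nil value currently sitting in a round $> i$ is $v$, so $v' = v$. If $p$ read $v'$ from a register in round $i$ itself, then Rule 3 (at most one non-nil value per round) combined with the presence of $v$ in round $i$ on acceptors of $Q$ again forces $v' = v$; this is the only place I rely on the already-established Lemma for Rule 3.

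The \textsc{None} case is where the induction hypothesis earns its keep. The two possible triggers are: (a) $p$ read \emph{nil} from some $a \in Q$ at register $r_i$, which is ruled out by $a[i] \in \{v, *\}$; or (b) $p$ read two distinct non-nil values from rounds $\geq i$. But any non-nil value $p$ has read from round $i$ is $v$ (by Rule 3 together with $v$ being present in round $i$), and by the inductive hypothesis any non-nil value $p$ has read from a round $> i$ is also $v$. So all such reads agree on $v$, contradicting the requirement of two distinct values. Hence \textsc{None} cannot arise.

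The one delicate point — really the main thing to get right rather than a serious obstacle — is the meaning of "next" write under the partial order on events. I would handle this by extending the write order to any linearization consistent with causality and inducting along it: the state table used by $p$ at the moment of writing $v'$ can only reflect writes that causally precede $p$'s write, so the inductive hypothesis applies to every non-nil value in rounds $> i$ that $p$ could conceivably have seen. Concurrent writes to other rounds do not interfere with the case analysis above, since the reasoning only examines values $p$ has actually read.
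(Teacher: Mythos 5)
Your proof is correct and follows essentially the same route as the paper's: the same three-way case split on the decision state of the deciding quorum $Q$ (\textsc{Decided}~$v'$, \textsc{Maybe}~$v'$, \textsc{None}), the same sub-cases for where $v'$ or the conflicting reads could have come from, and the same use of the induction hypothesis to close the ``read from a register $>i$'' sub-cases. Your extra paragraph making the ``next write'' precise via a linearization of the causal partial order is a slightly more careful treatment of a point the paper only flags in passing, but it does not change the argument.
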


Since the following proof overlaps significantly with the previous proof, we have underlined the parts which have been altered.

\begin{proof}
Assume the value $v$ is decided in round $i$ by quorum $Q \in \mathcal{Q}_i$ thus at some time $\forall a \in Q: a[i]=v$.
Since registers are write-once, it will always be that case that $\forall a \in Q: a[i]=v \lor *$.

Assume the value $v'$ is written to round $i'$ by proposer $p$ where $i<i'$.
\uline{Assume that all (non-nil) values written to registers $>i$ are $v$ thus $p$ can only read $v$ from (non-nil) registers $>i$.}
We will show that $v=v'$.

Consider the decision table of proposer $p$ when it is writing $v'$ to $i'$.
Since $i<i'$, the decision state of $Q$ must be either \textsc{None}, \textsc{Maybe}~$v'$ or \textsc{Decided}~$v'$ (Fig.~\ref{fig:decisiontable}).
We now consider each case in turn.
\begin{description}
  \item[Case] \textsc{Decided} $v'$:\\
  This decision state requires that $\forall a \in Q: a[i]=v'$.
  Since we know that $\forall a \in Q: a[i]=v \lor *$ then this case can only occur if $v=v'$.
  \item[Case] \textsc{Maybe} $v'$:\\
  This decision state can be reached in one of two ways:
  \begin{description}
    \item[Case] $p$ read $v'$ from register $i$ of some acceptor $a$: 
    Since at most value is written to each round, this case can only occur if $v=v'$.
    \item[Case] $p$ read $v'$ from a register $>i$:
    \uline{Since $v$ is the only (non-nil) value to be written to registers $>i$, then this case can only occur if $v=v'$.}
  \end{description}
  \item[Case] \textsc{None}:\\
  This decision state can be reached in one of two ways:
  \begin{description}
    \item[Case] $p$ read \emph{nil} from register $i$ of some acceptor $a \in Q$:\\
    Since $\forall a \in Q: a[i]=v \lor *$, this case cannot occur.
    \item[Case] $p$ read two different non-nil values from rounds $\geq i$:\\
    \uline{Since the proposer can only read $v$ from round $\geq i$, this case cannot occur.}
  \end{description}
\end{description}
Each case either requires that $v=v'$ or cannot occur, therefore it must be case that $v=v'$
\end{proof}

\end{document}